\documentclass[pra,aps,onecolumn,floatfix,superscriptaddress,notitlepage]{revtex4-1}
\usepackage{amsthm}
\usepackage{amsmath,bm}
\usepackage{amssymb}
\usepackage{braket}
\usepackage{bbold}
\usepackage{graphicx}
\usepackage{subfigure}
\usepackage{hyperref}
\usepackage{appendix}
\usepackage[T1]{fontenc}
\usepackage{color}
\usepackage[latin1]{inputenc}
\newcommand{\nc}{\newcommand}
\nc{\tr}{{\operatorname{Tr}}}
\newtheorem{theorem}{Theorem}

\newtheorem{proposition}[theorem]{Proposition}
\usepackage{natbib}
\begin{document}
\title{Classification of informative subsets in quantum encrypted cloning on qudits }
\author{Chen-Ming Bai}
\email{baichm@stdu.edu.cn}
\author{Xin-Liang Zhou}
\affiliation{Department of Mathematics and Physics, Shijiazhuang Tiedao University,
Shijiazhuang, 050043, China}
\author{Yu Luo}
\email{luoyu@snnu.edu.cn}
\affiliation{School of Artificial Intelligence and Computer Science, Shaanxi Normal University, Xi'an, 710062, China}

\begin{abstract}
Encrypted cloning offers a means of introducing redundancy into quantum storage while respecting the no-cloning theorem: an unknown state is encoded into multiple signal-noise pairs, and only authorized subsets can recover the original information. However, the leakage properties of unauthorized subsets particularly for higher-dimensional systems (qudits) have remained unexplored. In this work, we systematically classify the informative subsets of the storage register in the qudit encrypted-cloning protocol. We focus on unauthorized subsets of size $n$ that contain exactly one qudit from each signal-noise pair. We show that the presence or absence of information leakage is determined by the solution set of a system of congruences whose coefficients depend on the dimension $d$ and on the numbers of signal and noise qudits in the subset. The reduced state is completely uninformative if and only if the congruence system admits only the trivial solution; otherwise, it retains a residual dependence on the input state through specific generalized Pauli operators. Low-dimensional examples ($n=1,2,3$) are worked out explicitly, and the complete classification is expressed in terms of a greatest-common-divisor condition. Our results extend the parity-based classification known for qubits ($d=2$) to arbitrary finite dimensions, revealing a dimension-dependent boundary of confidentiality in encrypted cloning.
\end{abstract}
\pacs{03.67.a, 03.65.Ud}

\maketitle
\section{Introduction}
\label{sec:intro}
In classical information processing, copying bits is straightforward. In quantum mechanics, however, the no-cloning theorem~\cite{wootters1982single, dieks1982communication}, a direct consequence of linearity and unitarity, forbids the perfect duplication of an unknown state. Alongside the uncertainty principle and the no-discrimination theorem, no-cloning provides the foundation for the unconditional security of quantum cryptography~\cite{bennett1992quantum, gisin2002quantum, pirandola2020advances}, quantum key distribution~\cite{renner2008security, li2025microsatellite,Wiesemann2026consolidated} and quantum error correction~\cite{wang2026demonstration,aydin2026quantum}.
Following this theorem, relaxations have led to a theory of quantum cloning, encompassing approximate cloning~\cite{buvzek1996quantum, hillery1997quantum},  probabilistic cloning~\cite{duan1998probabilistic} and virtual cloning~\cite{bi2025virtual}.

In 2026, Yamaguchi and Kempf proposed a protocol called encrypted cloning~\cite{yamaguchi2026encrypted}, which successfully enables redundant cloning of an unknown qubit while remaining fully compatible with the no-cloning theorem. In this protocol, the original qubit interacts unitarily with several pre-prepared Bell pairs, generating multiple encrypted clones (signal qubits) together with corresponding noise qubits. Individually, each signal qubit carries no information about the original state; however, by using all noise qubits as a one-time key, the original state can be fully recovered. This protocol provides, for the first time, a way to achieve "clonable but decryptable-only-once" quantum redundancy, opening up promising applications in quantum storage~\cite{huang2024performing}, quantum secret sharing\cite{hillery1999quantum, zhang2025device, di2025security}, and quantum communication~\cite{usenko2026continuous}.

Building on this work, Cear$\check{a}$ extended the encrypted cloning protocol to quantum systems of arbitrary finite dimension (qudits)~\cite{cear2026clon}. The author addressed the non-unitarity challenge that emerges in higher dimensions¡ªwhere generalized Pauli operators are no longer Hermitian¡ªby constructing encryption operators based on CAZAC sequences. The corresponding decryption operators are also provided, and their circuit implementation complexity is analyzed. This generalization demonstrates that the encrypted cloning mechanism is intrinsically dimension-independent and can be realized in qudit systems as well.

However, the confidentiality guarantee of the encrypted cloning protocol does not imply that every unauthorized subset is completely non-informative. Recently, Gianini et al.~\cite{gianini2026encrypted} showed that, for qubits, certain unauthorized subsets of the storage register (i.e., those failing the full recovery condition) can still retain partial information about the original state. By systematically analyzing the subset structure of the storage register, they found that information leakage depends solely on the parity of the number of signal qubits in the subset, and that only the
$y$-component of the Bloch vector is leaked. This finding reveals a structural limitation of confidentiality in encrypted cloning and demonstrates that "authorization" and "information leakage" are not strict opposites.

Motivated by these results, in this paper, we systematically investigate information leakage in qudit-based encrypted cloning. Extending the classification framework of Gianini et al. to arbitrary dimensions, we focus on unauthorized subsets that contain exactly one qudit from each signal-noise pair¡ªi.e., aligned subsets of size
$n$. Through analytical computation of the reduced density matrix and exploitation of the algebraic structure of generalized Pauli operators, we show that the occurrence of leakage no longer depends on simple parity conditions. Instead, it is governed by the solution set of a system of congruences, which ultimately reduces to a greatest-common-divisor condition involving the dimension $d$, the number $p$ of signal qudits, and the number
$q=n-p$ of noise qudits.

Specifically, for an aligned subset consisting of
$p$ signal qudits and $q$ noise qudits, we prove that the reduced state is completely uninformative (maximally mixed) if and only if
${\rm gcd}(d,p(q+1)-1)=1$. Otherwise, the subset retains a residual dependence on the input state, which manifests through specific generalized Pauli expectations; such subsets are therefore partially informative. For
$d=2$, our classification reduces exactly to the parity-based criterion of Gianini et al., thereby providing a rigorous generalization to higher dimensions and revealing a dimension-dependent confidentiality boundary in encrypted cloning.

The remainder of the paper is structured as follows. In Section \ref{sec:intro}, we review the qudit encrypted cloning protocol and introduce the necessary notation for generalized Pauli operators. Section \ref{sec:low-dim} demonstrates the leakage mechanism using low-dimensional examples ($n=1,2,3$). In Section \ref{sec:general}, we provide the general classification of subsets of the encrypted-clone storage register. Section \ref{sec:conclusion} concludes the paper with a discussion of the security implications of our results.
\section{Preliminary} \label{sec:Preliminary}
\subsection{Generalized Pauli operators}
For any natural number \(d\), let \(\mathbb{Z}_d= \{0, 1, \cdots, d-1\}\) be the set of congruence classes modulo \(d\). Then we consider a \(d\)-dimensional Hilbert space $\mathcal{H}$ with a computational basis \(\{ |i\rangle \mid i \in \mathbb{Z}_d \}.\)
Furthermore, we define the shift operator \(X_d\) and the phase operator \(Z_d\) on $\mathcal{H}$ as follows \cite{schwinger1960unitary}:
\begin{equation}
\label{eq:xzoperator1}
   X_d=\sum_{k=0}^{d-1}\ket{k+1}\bra{k},\quad Z_d=\sum_{k=0}^{d-1}\omega^k\ket{k}\bra{k}.
\end{equation}
where "+"  is expressed as modulo $d$. These operators act on the computational basis as
\begin{equation}
\label{eq:xzoperator2}
   X_d|k\rangle = |k+ 1\rangle,\quad Z_d|k\rangle =\omega^k|k\rangle,
\end{equation}
for \(k \in \mathbb{Z}_d,\) with \(\omega = e^{\frac{2\pi {\rm i}}{d}}\). Therefore, $X_d^\dagger=X_d^\top= X_d^{-1}$, $Z_d^{\top} = Z_d$ and $Z_d^\dagger = Z_d^{-1}$

Moreover, it can be easily checked that \(X_d^d = Z_d^d = I\), where
$d$ is the dimension of the Hilbert space.
And these operators satisfy the commutation relation $X_dZ_d = \omega^{-1}Z_dX_d$,
which implies
\begin{equation}
    \label{eq:xzoperator3}
    X_d^k Z_d^l = \omega^{-kl} Z_d^l X_d^k,
\end{equation}
for all \(k,l \in \mathbb{Z}_d\).
\subsection{The generalized Pauli operators-based encoding}
In this section, we review the encrypted-cloning protocol for qudits introduced in \cite{cear2026clon}, focusing on the single-input qudit case and the notation required for leakage analysis. Let $A$
 denote the input qudit, which is prepared in an unknown pure state $\ket{\psi}_A$.
To generate \(n>1\) encrypted clones, the protocol introduces \(n\) $d$-dimensional maximal entangled state:
\begin{equation}
    \label{eq:messtate1}
   \ket{\Phi}_{S_iN_i}=\frac{1}{\sqrt{d}}\sum_{k=0}^{d-1}\ket{k}_{S_i}\ket{k}_{N_i},
\end{equation}
where $i=1,2,\cdots,n$, \(S_i\) denotes the \(i\)-th signal qudit and \(N_i\) represents the corresponding noise qudit.

The encoding for encrypted cloning is implemented by a unitary transformation that jointly acts on the input qudit \(A\)
 and the signal qudits \(S_1,\cdots,S_n\), while leaving the noise qudits \(N_1,\dots,N_n\) unaffected. To encode a pure state $\ket{\psi}$ into $n$ encrypted clones, Cear$\check{a}$ introduces the following unitary:
\begin{equation}
    \label{definition_enc_mine_final}
    U_{{\rm enc}_{d}}^{(n)} = \frac{1}{d}\sum_{k,l=0}^{d-1}c_{kl} ({X_{ d}^{(A)}})^k({Z_{ d}^{(A)}})^{l} \otimes\left(\bigotimes_{i=1}^{n}({X_d^{(S_i)}})^{k}({Z_d^{(S_i)}})^{l}\right),
\end{equation}
where  \(c_{kl}= e^{\frac{-{\rm i}\pi k(k+\delta)}{d}}\cdot e^{\frac{-{\rm i}\pi l(l+\delta)}{d}}\), $\delta=d\bmod 2$ and for each of the coefficients \(c_{kl}\), \(|c_{kl}|=1\).

As given in Ref.~\cite{cear2026clon}, the initial state for encrypted-cloning encoding is
\begin{equation}
\ket{\psi}_{\rm init} = \ket{\psi}_A\otimes\Big(\bigotimes_{i=1}^n\ket{\Phi_d}_{S_i,N_i}\Big).
\end{equation}
Hence, applying the encryption unitary to the initial state yields the encrypted state, which is given by
\begin{eqnarray}
    \ket{\psi}_{\rm enc} &=& U_{{\rm enc}_{d}}^{(n)} \otimes I_d^{\otimes(n)}\ket{\psi}_{\rm init} \nonumber\\
    &=&\frac{1}{d}\sum_{k,l=0}^{d-1}c_{kl}\Big({X_d^{\scriptscriptstyle(A)}}^{k}{Z_d^{\scriptscriptstyle(A)}}^{l} \Big)\ket{\psi}_A  \otimes \left(\bigotimes_{i=1}^{n} \Big({X_d^{\scriptscriptstyle(S_i)}}^{k}{Z_d^{\scriptscriptstyle(S_i)}}^{l} \otimes I_d^{\scriptscriptstyle(N_i)}\Big)\ket{\Phi_d}\right).
\end{eqnarray}
For clarity, we omit all superscripts in the following descriptions. Furthermore, we can express the density matrix of the state as
\begin{eqnarray}
\label{eq:jiamitai}
  \rho_{\rm enc} &=& \frac{1}{d^2}\sum_{k,l,m,t=0}^{d-1}c_{kl}c_{mt}^{*}X_d^{k}Z_d^{l}\ket{\psi}\bra{\psi}Z_d^{-t}X_d^{-m}\otimes \left(\bigotimes_{i=1}^{n}(X_d^{k}Z_d^{l}\otimes I_d) \ket{\Phi_d}\bra{\Phi_d}(Z_d^{-t}X_d^{-m}\otimes I_d)\right),
\end{eqnarray}
where
\begin{equation}
    \label{eq:xishu}
    c_{kl}c_{mt}^{*}=\exp\left(-{\rm i}\pi\frac{k(k+\delta)+l(l+\delta)-m(m+\delta)-t(t+\delta)}{d}\right)
\end{equation}
and $\delta=d\bmod 2$.
\subsection{Authorization vs. informativeness}
In Ref. \cite{gianini2026encrypted}, Gianini et al. introduced the notions of authorized and unauthorized sets and classified the subsets of qubit-encrypted cloning storage registers into three types: fully informative, completely uninformative and partially informative. Building on this framework, this section presents the corresponding concepts for encrypted cloning on qudits.

The \textit{authorized sets} for the encryption-cloning scheme on the qudit are defined the following rules:

(\textbf{Auth1}) A subset of the total system is termed authorized if it includes one complete signal-noise pair and, in addition, at least one qudit from each of the remaining
$n-1$ pairs. As an illustration, the set formed by a single encrypted clone qudit together with all the noise qudits is authorized.

(\textbf{Auth2}) Equivalently, a subset is unauthorized whenever it fails to satisfy this condition; for instance, any subsystem composed of $n-1$ complete pairs is unauthorized.

(\textbf{Auth3}) Additionally, any subsystem consisting of all $n$ noise qudits alone is also unauthorized.

It is important to note that unauthorized access (i.e., the inability to recover the original qudit) does not necessarily imply a complete absence of information leakage or a maximally mixed state. Sub-authorized subsets may, in principle, retain partial information about the input state. Prior work has already identified some unauthorized subsets that are \textit{completely uninformative}:

(\textbf{Inf1})
For $n>1$ each individual clone qudit is maximally mixed and therefore locally uninformative;

(\textbf{Inf2})
the source qubit $A$ is likewise maximally mixed after encoding;

(\textbf{Inf3}) a subset of $\{S_j, N_j\}$ lacking a complete pair is completely uninformative about $\ket{\psi}$.

Whether unauthorized subsets beyond those already considered are completely independent of the input state or instead exhibit residual leakage remains to be determined. The distinction between a reduced subsystem being entirely non-informative or retaining partial information is dictated by the interference structure among the generalized Pauli branches of the encoded state in Eq. (\ref{eq:jiamitai}).

Building on the work of Cear$\check{a}$ \cite{cear2026clon}, which characterizes \textit{fully authorized subsets}, we aim to identify those that are \textit{completely uninformative} (i.e., independent of the input state  \(\ket{\psi}\)) versus those that are \textit{partially informative}. For the latter, we also provide an expression for the reduced state to quantify the leakage. Furthermore, consider the \textit{storage register} consisting of $2n$ qudits, defined as
$$
{\mathcal R}_n\equiv \{S_1,N_1,S_2,N_2,\cdots,S_n,N_n\}.
$$
For any subset $B \subseteq {\mathcal R}_{n}$, we say that
$B$ is \textit{completely uninformative} if its reduced state
\(\rho_B(\psi)\) is independent of the input state \(\ket{\psi}\).
In contrast, we say that $B$ exhibits leakage if
\(\rho_B(\psi)\) retains a nontrivial dependence on \(\ket{\psi}\), even when
$B$ is not an authorized subset and therefore does not permit full recovery of the input.

A special case of \textit{completely uninformative} state is a maximally mixed state. However, we will meet completely uninformative states which are not maximally mixed.

A  class of subsets important for our discussion are those that miss a complete signal-noise pair $\{S_i,N_i\}$ and it is easy to show that they are also \textit{completely uninformative}. This will help delimit the scope of our analysis of the reduced states.

\begin{proposition}
Any subset \(B\subset \mathcal R_n\) that misses a complete pair \(\{S_j,N_j\}\) is completely non-informative.
\end{proposition}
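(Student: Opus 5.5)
It suffices to treat the largest such subset, $B=\mathcal R_n\setminus\{S_j,N_j\}$: every smaller subset missing the pair is a subsystem of it, and a partial trace of a $\psi$-independent state is again $\psi$-independent. The plan is to trace out $A$, $S_j$ and $N_j$ from $\rho_{\rm enc}$ in Eq.~(\ref{eq:jiamitai}) and show that the result no longer depends on $\ket{\psi}$.

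\textbf{Step 1: trace out the pair $\{S_j,N_j\}$.} Tracing this pair out of the $(k,l;m,t)$ branch of Eq.~(\ref{eq:jiamitai}) gives the factor
\[
\tr\bigl[(X_d^kZ_d^l\otimes I)\ket{\Phi_d}\bra{\Phi_d}(Z_d^{-t}X_d^{-m}\otimes I)\bigr]=\bra{\Phi_d}(Z_d^{-t}X_d^{-m}X_d^kZ_d^l\otimes I)\ket{\Phi_d}=\tfrac1d\tr\bigl(Z_d^{-t}X_d^{k-m}Z_d^{l}\bigr).
\]
The trace on the right vanishes unless $k=m$, and then also unless $l=t$. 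It equals $d$ when $(k,l)=(m,t)$, so the factor is $\delta_{km}\delta_{lt}$. In addition, $c_{kl}c_{kl}^*=1$ by Eq.~(\ref{eq:xishu}).

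\textbf{Step 2: trace out $A$.} After Step 1 only the diagonal branches $(m,t)=(k,l)$ survive. Tracing out $A$ then gives
\[
\tr\bigl(X_d^kZ_d^l\ket{\psi}\bra{\psi}Z_d^{-l}X_d^{-k}\bigr)=\braket{\psi|\psi}=1
\]
by unitarity. Hence
\[
\rho_{\mathcal R_n\setminus\{S_j,N_j\}}=\frac{1}{d^2}\sum_{k,l=0}^{d-1}\bigotimes_{i\neq j}(X_d^kZ_d^l\otimes I)\ket{\Phi_d}\bra{\Phi_d}(Z_d^{-l}X_d^{-k}\otimes I),
\]
which does not involve $\ket{\psi}$ at all. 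This is a completely uninformative state; it is generally not maximally mixed, since it is a classically correlated mixture of Bell states.

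The only real obstacle is Step 1, specifically the orthogonality $\tr(X_d^{a}Z_d^{b})=d\,\delta_{a0}\delta_{b0}$ once the operators are brought into normal order. I would reorder $Z_d^{-t}X_d^{k-m}Z_d^{l}$ using Eq.~(\ref{eq:xzoperator3}). This only produces a phase in front of $X_d^{k-m}Z_d^{l-t}$, and the phase is irrelevant for deciding whether the trace vanishes. For the exact case $(k,l)=(m,t)$ the operator is simply $I$, so the phase causes no difficulty there either.
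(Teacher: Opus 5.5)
Your proof is correct and follows essentially the same route as the paper. You trace out $A$ and the full pair $\{S_j,N_j\}$, use $\tr(X_d^aZ_d^b)=d\,\delta_{a0}\delta_{b0}$ (the paper phrases this as orthonormality of the Bell basis) to kill every branch with $(k,l)\neq(m,t)$, and note that the surviving diagonal branches have unit coefficient, which leaves a $\psi$-independent mixture of Bell projectors. Your explicit reduction to the maximal subset $\mathcal R_n\setminus\{S_j,N_j\}$, on the grounds that a partial trace of a $\psi$-independent state is $\psi$-independent, is a useful clarification that the paper leaves implicit.
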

\proof First, let us consider the partial trace of system $A$.
The full encoded state $\rho_{\mathrm{enc}}^{(n)}$ consists of the qudit $A$ (which is no longer in the input state $\ket{\psi}$ due to the encoding) together with $n$ pairs $\{S_i,N_i\}$. To obtain the reduced state of any subset of ${\mathcal R}_n$, we trace out
$A$ from Eq.(\ref{eq:jiamitai}). Hence, the factor corresponding to $A$ is replaced by
\begin{eqnarray}
\label{eq:traceA}
&&\tr_A\left(X_d^{k}Z_d^{l}\ket{\psi}\bra{\psi}Z_d^{-t}X_d^{-m}\right)= \omega^{-t(k-m)}\bra{\psi}X_d^{k-m}Z_d^{l-t}\ket{\psi}.
\end{eqnarray}

It is worth noticing that for $k=m$ and $l=t$ this scalar is identically equal to $1$ for the property of the generalized Pauli operators and the normalization of the state $\ket{\psi}$.

Next, we trace out a full pair $\{S_j,N_j\}$ from the storage register. When we remove an entire pair $\{S_j,N_j\}$ from ${\mathcal R}_n$, we do so by tracing out both $S_j$ and $N_j$ from the factor $\Big(X_d^{k}Z_d^{l}\otimes I_d) \ket{\Phi_d}\bra{\Phi_d}(Z_d^{-t}X_d^{-m}
\otimes I_d\Big)_{\{S_jN_j\}}$ in Eq.(\ref{eq:jiamitai}).
Thus,
\begin{eqnarray}
 && \tr_{S_j N_j}\left((X_d^{k}Z_d^{l}\otimes I_d) \ket{\Phi_d}\bra{\Phi_d}(Z_d^{-t}X_d^{-m}\otimes I_d)\right)
   = \omega^{-t(k-m)}\bra{\Phi_d}(X_d^{k-m}Z_d^{l-t}\otimes I_d)\ket{\Phi_d}
\end{eqnarray}
due to the orthonormality of the Bell basis. Consequently, all off-diagonal terms $k\neq m$ or $l\neq t$ in Eq.(\ref{eq:jiamitai}). If we have traced-off also $A$, then, thanks for the fact that the coefficients are equal to $1$, we are left with
\begin{equation}
  \rho_{\mathrm{red}}= \frac{1}{d^2} \sum_{k,l=0}^{d-1}\bigotimes_{i \neq j}\,
\Big((X_d^{k}Z_d^{l}\otimes I_d) \ket{\Phi_d}\bra{\Phi_d}(Z_d^{-l}X_d^{-k}\otimes I_d)\Big)_{S_i N_i}.
\end{equation}
This reduced state no longer depends on the initial state
$\ket{\psi}$; hence it is completely uninformative about it. Although the reduced state is not maximally mixed, it is \textit{completely uninformative} about $\ket{\psi}$.
Thus, any subset of $B\subset{\mathcal R}_n$ missing even a single full pair  $\{S_j,N_j\}$ is completely uninformative about the initial state $\ket{\psi}$.

In Ref.\cite{gianini2026encrypted}, Gianini et al. analyzed the issue of subset size $B\subset{\mathcal R}_n$ for the case of $|B|= n$.
In this paper, we extend this study to the setting of encrypted cloning on qudits. Up to permutation symmetry, such a subset $B$ can be written as
\[
B_{n,p}=\{S_1,S_2,\dots,S_p,N_{p+1},\dots,N_{n-1}, N_n\}
\]
which contains $p\geq 0$ signal qudits and $q=n-p$ noise qudits.  Since $B_{n,p}$ is an unauthorized subset, it cannot be fully informative; we thus need to determine whether it is partially or completely uninformative. Accordingly, in the following we analyze the reduced states of all unauthorized subsets of size $|B|=n$ of the storage register ${\mathcal R}_n$.
\section{Illustrative low-dimensional cases for the storage register}
\label{sec:low-dim}

To clearly illustrate the mechanism of data leakage, we first consider several low-dimensional examples. For convenience, we begin with $n=1$. Although the protocol is designed for $n>1$, this special case provides valuable guidance by revealing the fundamental mechanism of data leakage.

\subsection{The case $n=1$.}
For \(n=1\), the density matrix of the encoded state can be written as
\begin{eqnarray}
\label{eq:jiamitai1}
  \rho^{(1)}_{\rm enc}  &=& \frac{1}{d^2}\sum_{k,l,m,t=0}^{d-1}c_{kl}c_{mt}^{*}\left(X_d^{k}Z_d^{l}\ket{\psi}\bra{\psi}Z_d^{-t}X_d^{-m}\right)_A\otimes \left((X_d^{k}Z_d^{l}\otimes I_d) \ket{\Phi_d}\bra{\Phi_d}(Z_d^{-t}X_d^{-m}\otimes I_d)\right)_{S_1N_1}.\nonumber
\end{eqnarray}

According to Rule (Auth1), the set $\{S_1, N_1\}$ possesses complete informativeness. Therefore, the interesting subsets worthy of study are $\{S_1\}$ and $\{N_1\}$. First, we must apply the partial trace operation consistently to system $A$; depending on the specific subset, the partial trace operation must also be performed on either $S_1$ or $N_1$. To this end, we review some useful identities. By taking the partial trace over the quantum system $A$, we obtain the following scalar
\begin{equation*}
\tr_A\left(X_d^{k}Z_d^{l}\ket{\psi}\bra{\psi}Z_d^{-t}X_d^{-m}\right)=\omega^{-t(k-m)}\bra{\psi}X_d^{k-m}Z_d^{l-t}\ket{\psi},
\end{equation*}
whereas, tracing out $N_1$ or $S_1$ would yield respectively
\begin{eqnarray}
\label{eq:bellN1}
   &\quad& \tr_{N_1}\left[\left((X_d^{k}Z_d^{l}\otimes I_d) \ket{\Phi_d}\bra{\Phi_d}(Z_d^{-t}X_d^{-m}\otimes I_d)\right)_{S_1N_1}\right]\nonumber\\
&=&\frac{1}{d}\tr_{N_1}\left[\left((\sum_{p=0}^{d-1} \omega^{pl}\ket{p+k}\ket{p})(\sum_{p'=0}^{d-1} \omega^{-p't}\bra{p'-m}\bra{p'})  \right)_{S_1N_1}\right]\nonumber\\
&=&\frac{1}{d}\sum_{p,p'=0}^{d-1} \omega^{pl-p't}\ket{p+k}\bra{p'-m}\cdot\delta_{p',p}\nonumber\\
&=&\frac{1}{d}\left(X_d^{k}Z_d^{l}(\sum_{p=0}^{d-1}\ket{p})\right)\left((\sum_{p'=0}^{d-1}\bra{p'})Z_d^{-t}X_d^{-m} \right)\cdot\delta_{p',p}\nonumber\\
&=&\frac{1}{d}\left(X_d^{k}Z_d^{l}(\sum_{p=0}^{d-1}\ket{p}\bra{p})Z_d^{-t}X_d^{-m} \right)\nonumber\\
&=&\frac{1}{d}\omega^{-m(l-t)}X_d^{k-m}Z_d^{l-t},
\end{eqnarray}
and
\begin{eqnarray}
\label{eq:bellS1}
   &\quad& \tr_{S_1}\left[\left((X_d^{k}Z_d^{l}\otimes I_d) \ket{\Phi_d}\bra{\Phi_d}(Z_d^{-t}X_d^{-m}\otimes I_d)\right)_{S_1N_1}\right]\nonumber\\
&=&\frac{1}{d}\tr_{S_1}\left[\left((\sum_{p=0}^{d-1} \omega^{pl}\ket{p+k}\ket{p})(\sum_{p'=0}^{d-1} \omega^{-p't}\bra{p'-m}\bra{p'})  \right)_{S_1N_1}\right]\nonumber\\
&=&\frac{1}{d}\sum_{p,p'=0}^{d-1} \omega^{pl-p't}\ket{p}\bra{p'}\cdot\delta_{p'-m,p+k}\nonumber\\
&= &\frac{1}{d}\sum_{v=0}^{d-1} \omega^{l(v-k)}\ket{v-k}\bra{v+m}\omega^{-t(v+m)}\qquad (p'-m=p+k=v)\nonumber\\
&=&\frac{1}{d}\left(Z_d^{l}X_d^{-k}(\sum_{v=0}^{d-1}\ket{v}\bra{v})X_d^{m}Z_d^{-t} \right)\nonumber\\
&=&\frac{1}{d}\omega^{l(m-k)}X_d^{m-k}Z_d^{l-t}.
\end{eqnarray}

These yield for the subset $\{S_1\}$
\begin{eqnarray}
\label{eq:rho-s1}
&&\rho_{S_1}^{(1)}
=
\frac{1}{d^2}\sum_{k,l,m,t=0}^{d-1}c_{kl}c_{mt}^{*}\tr_A\left((X_d^{k}Z_d^{l}\ket{\psi}\bra{\psi}Z_d^{-t}X_d^{-m})_A\right) \otimes \tr_{N_1}\left(\big((X_d^{k}Z_d^{l}\otimes I_d) \ket{\Phi_d}\bra{\Phi_d}(Z_d^{-t}X_d^{-m}\otimes I_d)\big)_{S_1N_1}\right)\nonumber\\
&&\qquad=\frac{1}{d^3}\sum_{k,l,m,t=0}^{d-1}c_{kl}c_{mt}^{*}\omega^{-t(k-m)-m(l-t)}\bra{\psi}X_d^{k-m}Z_d^{l-t}\ket{\psi}X_d^{k-m}Z_d^{l-t}
\end{eqnarray}
and for the subset $\{N_1\}$
\begin{eqnarray}
\label{eq:rho-n1}
&&\rho_{N_1}^{(1)}
=
\frac{1}{d^2}\sum_{k,l,m,t=0}^{d-1}c_{kl}c_{mt}^{*}\tr_A\left((X_d^{k}Z_d^{l}\ket{\psi}\bra{\psi}Z_d^{-t}X_d^{-m})_A\right)\otimes \tr_{S_1}\left(\big((X_d^{k}Z_d^{l}\otimes I_d) \ket{\Phi_d}\bra{\Phi_d}(Z_d^{-t}X_d^{-m}\otimes I_d)\big)_{S_1N_1}\right)\nonumber\\
&&\qquad=\frac{1}{d^3}\sum_{k,l,m,t=0}^{d-1}c_{kl}c_{mt}^{*}\omega^{(m-k)(l+t)}\bra{\psi}X_d^{k-m}Z_d^{l-t}\ket{\psi}X_d^{m-k}Z_d^{l-t}.
\end{eqnarray}
To simplify the states, we set $a=k-m$ and $b=l-t$, thus $k=m+a$ and $l=t+b$. Hence, the states $\rho_{S_1}^{(1)}$ in Eq.(\ref{eq:rho-s1}) and
$\rho_{N_1}^{(1)}$ in Eq.(\ref{eq:rho-n1}) are changed into
\begin{equation}
\label{eq:rhos11}
\rho_{S_1}^{(1)}
=\frac{1}{d^3}\sum_{a,b=0}^{d-1}\sum_{m,t=0}^{d-1}c_{m+a,t+b}\cdot c_{mt}^{*}\omega^{-ta-mb}\bra{\psi}X_d^{a}Z_d^{b}\ket{\psi}X_d^{a}Z_d^{b}
\end{equation}
and
\begin{equation}
\label{eq:rhon11}
\rho_{N_1}^{(1)}
=\frac{1}{d^3}\sum_{a,b=0}^{d-1}\sum_{m,t=0}^{d-1}c_{m+a,t+b}\cdot c_{mt}^{*}\omega^{-a(2t+b)}\bra{\psi}X_d^{a}Z_d^{b}\ket{\psi}X_d^{-a}Z_d^{b}.
\end{equation}

Next, we need to calculate the following formulas
\begin{eqnarray}
\label{eq:xishi1}
  && \sum_{m,t=0}^{d-1}c_{m+a,t+b}\cdot c_{mt}^{*}\omega^{-ta-mb}\nonumber\\
  &=&\sum_{m,t=0}^{d-1}\exp\left(-{\rm i}\pi\frac{(m+a)(m+a+\delta)+(t+b)(t+b+\delta)-m(m+\delta)-t(t+\delta)}{d}\right)\cdot \exp\left(\frac{{\rm  i}\pi(-2ta-2mb)}{d}\right)\nonumber\\
  &=&\exp\left(\frac{-{\rm i}\pi(a^2+b^2+\delta(a+b))}{d}\right)\cdot\left(\sum_{m=0}^{d-1}\exp(-2\pi{\rm i}m\frac{a+b}{d})\right) \cdot\left(\sum_{t=0}^{d-1}\exp(-2\pi{\rm i}t\frac{a+b}{d})\right)
\end{eqnarray}
and
\begin{eqnarray}
\label{eq:xishu2}
  &&  \sum_{m,t=0}^{d-1}c_{kl}c_{mt}^{*}\omega^{(m-k)(l+t)}\nonumber\\
  &=& \sum_{m,t=0}^{d-1}\exp\left(-{\rm i}\pi\frac{(m+a)(m+a+\delta)+(t+b)(t+b+\delta)-m(m+\delta)-t(t+\delta)}{d}\right)\cdot \exp\left(\frac{2{\rm  i}\pi(m-k)(l+t)}{d}\right)\nonumber\\
   &=&\exp\left(\frac{-{\rm i}\pi((a+b)^2+\delta(a+b))}{d}\right)\cdot\left(\sum_{m=0}^{d-1}\exp(-2\pi{\rm i}m\frac{a}{d})\right) \cdot\left(\sum_{t=0}^{d-1}\exp(-2\pi{\rm i}t\frac{2a+b}{d})\right).
\end{eqnarray}
Using the identity
\begin{equation}
\label{eq:xingzhi1}
    \sum_{x=0}^{d-1}e^{\frac{-2\pi{\rm i}xk}{d}}=
    \begin{cases}
        d,\ {\rm if}\ \ k=0 \bmod{d};\\
        0,\ {\rm otherwise},
    \end{cases}
\end{equation}
 Eq.(\ref{eq:xishi1}) is simplified to
 \begin{equation}
     \sum_{m,t=0}^{d-1}c_{m+a,t+b}\cdot c_{mt}^{*}\omega^{-ta-mb}
  =d^2\cdot\exp\left(\frac{-{\rm i}\pi(a^2+b^2+\delta(a+b))}{d}\right)\cdot \delta_{a+b=0\bmod d}.
 \end{equation}
When $a+b=0\bmod d$, then we obtain that
\begin{equation}
\label{eq:xishurhos1}
    \sum_{m,t=0}^{d-1}c_{m+a,t+b}\cdot c_{mt}^{*}\omega^{-ta-mb}
  =d^2\cdot\exp\left(\frac{-2{\rm i}\pi a^2}{d}\right)=d^2\omega^{-a^2}.
\end{equation}
According to the result of Eq.(\ref{eq:xishurhos1}), the quantum state in Eq.(\ref{eq:rhos11}) can be simplified as follows
\begin{equation}
\label{eq:rhos12}
\rho_{S_1}^{(1)}
=\frac{1}{d}\sum_{a=0}^{d-1}\omega^{-a^2}\bra{\psi}X_d^{a}Z_d^{-a}\ket{\psi}X_d^{a}Z_d^{-a}.
\end{equation}
Thus the unauthorized set $\{S_1\}$ is partially informative.

Through Eq.(\ref{eq:xingzhi1}), Eq.(\ref{eq:xishu2}) can further be expressed as
\begin{equation}
    \label{eq:xishurhon1}
    \sum_{m,t=0}^{d-1}c_{kl}c_{mt}^{*}\omega^{(m-k)(l+t)}
   =d^2\exp\left(\frac{-{\rm i}\pi((a+b)^2+\delta(a+b))}{d}\right)\cdot \delta_{a=0\bmod d}\cdot \delta_{2a+b=0\bmod d}.
\end{equation}
Based on the congruence systems $a=0\bmod d$ and $2a+b=0\bmod d$, we can solve for $a=0$ and $b=0$. Furthermore, the state $\rho^{(1)}_{N_1}$ is charged to
\[
\rho^{(1)}_{N_1}=\frac{1}{d}I_d.
\]
The unauthorized set $\{N_1\}$ is completely uninformative.

For $n=1$, the subset $\{S_1 \}$ is unauthorized and cannot fully recover the input qudit. Nevertheless, it retains a residual dependence on the input state through the quantity $\bra{\psi}X_d^{a}Z_d^{-a}\ket{\psi}$ (with $a\neq 0$). This dependence can be directly extracted by measuring the observable $X_d^{a}Z_d^{-a}$ on $S_1$, providing a concrete instance of partial leakage. Cear$\check{a}$ previously observed that for
$n=1$ the clone ${S_1}$ is not completely encrypted, and he noted that the protocol is intended to work also for $n>1$.

\subsection{The case $n=2$.}
For \(n=2\), the density matrix of the encoded state is
\begin{eqnarray}
 \label{eq:2jiamitai}
    && \rho^{(2)}_{\rm enc} = \frac{1}{d^2}\sum_{k,l,m,t=0}^{d-1}c_{kl}c_{mt}^{*}\left(X_d^{k}Z_d^{l}\ket{\psi}\bra{\psi}Z_d^{-t}X_d^{-m}\right)_A
    \otimes \left((X_d^{k}Z_d^{l}\otimes I_d) \ket{\Phi_d}\bra{\Phi_d}(Z_d^{-t}X_d^{-m}\otimes I_d)\right)_{S_1N_1}\nonumber\\
    &&\qquad\quad\otimes \left((X_d^{k}Z_d^{l}\otimes I_d) \ket{\Phi_d}\bra{\Phi_d}(Z_d^{-t}X_d^{-m}\otimes I_d)\right)_{S_2N_2}.
\end{eqnarray}
For this scenario, we consider the unauthorized sets $\{S_1,N_2\}$, $\{S_2,N_1\}$ and $\{S_1,S_2\}$. Due to symmetry, $\{S_1,N_2\}$ and $\{S_2,N_1\}$ can be considered equivalent.

The reduced state for the subset $\{S_1,N_2\}$ is obtained by tracing out $A$ plus one noise qudit from one pair and one signal qudit from the other.
\begin{eqnarray}
\label{eq:rho-s1n2}
&&\rho_{S_1N_2}^{(2)}
=
\frac{1}{d^2}\sum_{k,l,m,t=0}^{d-1}c_{kl}c_{mt}^{*}\tr_A\left((X_d^{k}Z_d^{l}\ket{\psi}\bra{\psi}Z_d^{-t}X_d^{-m})_A\right) \otimes \tr_{N_1}\left(\big((X_d^{k}Z_d^{l}\otimes I_d) \ket{\Phi_d}\bra{\Phi_d}(Z_d^{-t}X_d^{-m}\otimes I_d)\big)_{S_1N_1}\right)\nonumber\\
 && \qquad\qquad\qquad  \otimes \tr_{S_2}\left(\big((X_d^{k}Z_d^{l}\otimes I_d) \ket{\Phi_d}\bra{\Phi_d}(Z_d^{-t}X_d^{-m}\otimes I_d)\big)_{S_2N_2}\right)\\
&&\qquad=\frac{1}{d^4}\sum_{k,l,m,t=0}^{d-1}c_{kl}c_{mt}^{*}\omega^{-t(k-m)-m(l-t)+l(m-k)}\bra{\psi}X_d^{k-m}Z_d^{l-t}\ket{\psi}X_d^{k-m}Z_d^{l-t}\otimes X_d^{m-k}Z_d^{l-t}.\nonumber
\end{eqnarray}

The two-signal qudit subset \(\{S_1,S_2\}\) is obtained tracing out $A$ and both the noise qudits $N_1,N_2$.
\begin{eqnarray}
\label{eq:rho-s1s2}
&&\rho_{S_1S_2}^{(2)}
=
\frac{1}{d^2}\sum_{k,l,m,t=0}^{d-1}c_{kl}c_{mt}^{*}\tr_A\left((X_d^{k}Z_d^{l}\ket{\psi}\bra{\psi}Z_d^{-t}X_d^{-m})_A\right)\otimes \tr_{N_1}\left(\big((X_d^{k}Z_d^{l}\otimes I_d) \ket{\Phi_d}\bra{\Phi_d}(Z_d^{-t}X_d^{-m}\otimes I_d)\big)_{S_1N_1}\right)\nonumber\\
 && \qquad\qquad  \otimes \tr_{N_2}\left(\big((X_d^{k}Z_d^{l}\otimes I_d) \ket{\Phi_d}\bra{\Phi_d}(Z_d^{-t}X_d^{-m}\otimes I_d)\big)_{S_2N_2}\right)\\
&&\qquad=\frac{1}{d^4}\sum_{k,l,m,t=0}^{d-1}c_{kl}c_{mt}^{*}\omega^{-t(k-m)-2m(l-t)}\bra{\psi}X_d^{k-m}Z_d^{l-t}\ket{\psi}X_d^{k-m}Z_d^{l-t}\otimes X_d^{k-m}Z_d^{l-t}.\nonumber
\end{eqnarray}
Assuming $a = k-m$ and $b = l-t$, the further calculation yields
\begin{equation}
    \label{eq:xishurhos1n2}
    \sum_{m,t=0}^{d-1}c_{kl}c_{mt}^{*}\omega^{-ta-mb-a(t+b)}
   =d^2\exp\left(\frac{-{\rm i}\pi((a+b)(\delta+a+b))}{d}\right)\cdot \delta_{a+b=0\bmod d}\cdot \delta_{2a+b=0\bmod d},
\end{equation}
and
\begin{equation}
    \label{eq:xishurhos1s2}
    \sum_{m,t=0}^{d-1}c_{kl}c_{mt}^{*}\omega^{-ta-2mb}
   =d^2 \delta_{a+2b=0\bmod d}\cdot \delta_{a+b=0\bmod d},
\end{equation}
Based on the congruence systems
\begin{equation*}
    \begin{cases}
        a+b=0\bmod d\\
        2a+b=0\bmod d
    \end{cases}
\end{equation*}
or
\begin{equation*}
    \begin{cases}
        a+b=0\bmod d\\
        a+2b=0\bmod d,
    \end{cases}
\end{equation*}
we can solve for $a=0$ and $b=0$. Hence, both the examined reduced states are uninformative
\[
\rho_{S_1N_2}=\frac{1}{d^2}\big(I_d\otimes I_d\big)\]
and
\[\rho_{S_1S_2}=\frac{1}{d^2}\big(I_d\otimes I_d\big).
\]
Therefore, the unauthorized sets $\{S_1N_2\}$ and $\{S_1S_2\}$ are completely uninformative.
\subsection{The case $n=3$.}
In the case $n=3$ the unauthorized sets are $\{S_1,S_2,S_3\}$, $\{S_1,S_2,N_3\}$, $\{S_1,N_2,S_3\}$, $\{N_1,S_2,S_3\}$, $\{S_1,N_2,N_3\}$, $\{S_2,N_1,N_3\}$ and $\{S_3,N_1,N_2\}$. For these unauthorized sets, we can primarily focus on sets $\{S_1,S_2,N_3\}$, $\{S_1,N_2,N_3\}$ and $\{S_1,S_2,S_3\}$; the remaining ones can be analyzed similarly.

Tracing out the $A$ and the complementary qudits of the storage register, the reduced states can be written as follows,
\begin{eqnarray}
\label{eq:rho-s1s2n3}
&&\rho_{S_1S_2N_3}^{(3)}
=\frac{1}{d^5}\sum_{k,l,m,t=0}^{d-1}c_{kl}c_{mt}^{*}\omega^{-t(k-m)-2m(l-t)-l(k-m)}\bra{\psi}X_d^{k-m}Z_d^{l-t}\ket{\psi}X_d^{k-m}Z_d^{l-t}\otimes X_d^{k-m}Z_d^{l-t}\otimes X_d^{m-k}Z_d^{l-t}.
\end{eqnarray}
\begin{eqnarray}
\label{eq:rho-s1n2n3}
&&\rho_{S_1N_2N_3}^{(3)}
=\frac{1}{d^5}\sum_{k,l,m,t=0}^{d-1}c_{kl}c_{mt}^{*}\omega^{-t(k-m)-m(l-t)-2l(k-m)}\bra{\psi}X_d^{k-m}Z_d^{l-t}\ket{\psi}X_d^{k-m}Z_d^{l-t}\otimes X_d^{m-k}Z_d^{l-t}\otimes X_d^{m-k}Z_d^{l-t}.
\end{eqnarray}
\begin{eqnarray}
\label{eq:rho-s1s2s3}
&&\rho_{S_1S_2S_3}^{(3)}
=\frac{1}{d^5}\sum_{k,l,m,t=0}^{d-1}c_{kl}c_{mt}^{*}\omega^{-t(k-m)-3m(l-t)}\bra{\psi}X_d^{k-m}Z_d^{l-t}\ket{\psi}X_d^{k-m}Z_d^{l-t}\otimes X_d^{k-m}Z_d^{l-t}\otimes X_d^{k-m}Z_d^{l-t}.
\end{eqnarray}

Assuming $a = k-m$ and $b = l-t$, the coefficients in Eq.(\ref{eq:rho-s1s2n3}), Eq.(\ref{eq:rho-s1n2n3}) and Eq.(\ref{eq:rho-s1s2s3}) are simplified as follows:
\begin{eqnarray}
    \label{eq:xishurhos1s2n3}
    &&\qquad\sum_{m,t=0}^{d-1}c_{kl}c_{mt}^{*}\omega^{-t(k-m)-2m(l-t)-l(k-m)}=\sum_{m,t=0}^{d-1}c_{m+a,t+b}c_{mt}^{*}\omega^{-ta-2mb-a(t+b)}\nonumber\\
   &&=d^2\exp\left(\frac{-{\rm i}\pi(a^2+b^2+2ab+\delta(a+b))}{d}\right)\cdot \delta_{a+2b=0\bmod d}\cdot \delta_{2a+b=0\bmod d},
\end{eqnarray}
\begin{eqnarray}
    \label{eq:xishurhos1n2n3}
    &&\qquad\sum_{m,t=0}^{d-1}c_{kl}c_{mt}^{*}\omega^{-t(k-m)-m(l-t)-2l(k-m)}=\sum_{m,t=0}^{d-1}c_{m+a,t+b}c_{mt}^{*} \omega^{-ta-mb-2a(t+b)}\nonumber\\
   &&=d^2\exp\left(\frac{-{\rm i}\pi(a^2+b^2+4ab+\delta(a+b))}{d}\right)\cdot \delta_{a+b=0\bmod d}\cdot \delta_{3a+b=0\bmod d},
\end{eqnarray}
\begin{eqnarray}
    \label{eq:xishurhos1s2s3}
    &&\qquad\sum_{m,t=0}^{d-1}c_{kl}c_{mt}^{*}\omega^{-t(k-m)-3m(l-t)}=\sum_{m,t=0}^{d-1}c_{m+a,t+b}c_{mt}^{*}\omega^{-ta-3mb}\nonumber\\
   &&=d^2\exp\left(\frac{-{\rm i}\pi(a^2+b^2+\delta(a+b))}{d}\right)\cdot \delta_{a+3b=0\bmod d}\cdot \delta_{a+b=0\bmod d}.
\end{eqnarray}
By Eq.(\ref{eq:xishurhos1s2n3}), for the congruence systems
    $\begin{cases}
        a+2b=0\bmod d\\
        2a+b=0\bmod d
    \end{cases}$,
we can obtain that
\begin{equation}
    \begin{cases}
        a=b=0,& {\rm if\ \  gcd}(d,3)=1,\\
        a=b=0, \frac{d}{3},\frac{2d}{3},& {\rm if\ \ gcd}(d,3)=3,
    \end{cases}
\end{equation}
where ${\rm gcd}(u,v)$ represents the greatest common divisor of $u$ and $v$.
Hence, both the examined reduced state is
\begin{equation}
    \rho_{S_1S_2N_3}^{(3)}= \begin{cases}
        \frac{I_d^{\otimes 3}}{d^3},& {\rm if\ \  gcd}(d,3)=1,\\
        \\
        \frac{1}{d^3}\Big(I_d^{\otimes 3}+x_1(X_d^{\frac{d}{3}}Z_d^{\frac{d}{3}})^{\otimes 3}+y_1(X_d^{\frac{2d}{3}}Z_d^{\frac{2d}{3}})^{\otimes 2}\otimes(X_d^{\frac{d}{3}}Z_d^{\frac{2d}{3}})\Big),
        & {\rm if\ \ gcd}(d,3)=3,
    \end{cases}
\end{equation}
where $x_1=\exp\left(-{\rm i}\pi(\frac{4d}{9}+\frac{2\delta}{3})\right)
\bra{\psi}X_d^{\frac{d}{3}}Z_d^{\frac{d}{3}}\ket{\psi}$ and $y_1=\exp\left(-{\rm i}\pi(\frac{16d}{9}+\frac{4\delta}{3})\right)
\bra{\psi}X_d^{\frac{2d}{3}}Z_d^{\frac{2d}{3}}\ket{\psi}$.

By Eq.(\ref{eq:xishurhos1n2n3}) and Eq.(\ref{eq:xishurhos1s2s3}), for the congruence systems
\begin{equation}
   \begin{cases}
        a+b=0\bmod d\\
        3a+b=0\bmod d
    \end{cases}
\end{equation}
    or
\begin{equation}
   \begin{cases}
        a+b=0\bmod d\\
        a+3b=0\bmod d
    \end{cases}
\end{equation}
we can solve that
\begin{equation}
    \begin{cases}
        a=b=0,& {\rm if\ \  gcd}(d,2)=1,\\
        a=b=0, \frac{d}{2},& {\rm if\ \ gcd}(d,2)=2.
    \end{cases}
\end{equation}
Hence, both the examined reduced states are
\begin{equation}
    \rho_{S_1N_2N_3}^{(3)}=\rho_{S_1S_2S_3}^{(3)}= \begin{cases}
        \frac{I_d^{\otimes 3}}{d^3},& {\rm if\ \  gcd}(d,2)=1,\\
        \\
        \frac{1}{d^3}\Big(I_d^{\otimes 3}+{\rm i}^d\bra{\psi}X_d^{\frac{d}{2}}Z_d^{\frac{d}{2}}\ket{\psi}(X_d^{\frac{d}{2}}Z_d^{\frac{d}{2}})^{\otimes 3}\Big),
        & {\rm if\ \ gcd}(d,2)=2.
    \end{cases}
\end{equation}

For $n=3$, the unauthorized sets consisting of three qudits that take exactly one from each pair $\{S_i,N_i\}$ are completely uninformative if ${\rm gcd}(d,2)=1$ or ${\rm gcd}(d,3)=1$, while they are partially informative if ${\rm gcd}(d,2)=2$ or ${\rm gcd}(d,3)=3$.
\section{General classification of subsets of the storage register}
\label{sec:general}

The examples of Section~\ref{sec:low-dim} suggest that leakage in encrypted cloning is not generic, but highly structured.
For the general case of $n$, we find that it depends on the solution of the congruence system.

By the reduction established in Section~\ref{sec:Preliminary}, the only unauthorized subsets that remain to be classified are, up to permutation of the pairs, the aligned subsets
\[
B_{n,p}=\{S_1,\dots,S_p,N_{p+1},\dots,N_n\},
\qquad 1\le p\le n.
\]
From now on, we will denote by $p$ the number of signal qudits, by $q$ the number of noise qudits, with $p+q=n$.

For these subsets, we compute the partial trace of the quantum system \(A\) and the complementary quantum systems in each pair of qudits, ultimately obtaining the reduced state.
\begin{eqnarray}
\label{eq:rho-s1s2np}
&&\rho_{S_1S_2N_3}^{(3)}
=\frac{1}{d^{2+n}}\sum_{k,l,m,t=0}^{d-1}c_{kl}c_{mt}^{*}\omega^{-t(k-m)-pm(l-t)-ql(k-m)}\bra{\psi}X_d^{k-m}Z_d^{l-t}\ket{\psi} (X_d^{k-m}Z_d^{l-t})^{\otimes^p}\otimes (X_d^{m-k}Z_d^{l-t})^{\otimes q}.
\end{eqnarray}

Assuming $a = k-m$ and $b = l-t$, the coefficient in Eq.(\ref{eq:rho-s1s2np}) is simplified as follows:
\begin{eqnarray}
    \label{eq:xishurhos1s2np}
    &&\qquad\sum_{m,t=0}^{d-1}c_{kl}c_{mt}^{*}\omega^{-t(k-m)-pm(l-t)-ql(k-m)}=\sum_{m,t=0}^{d-1}c_{m+a,t+b}c_{mt}^{*}\omega^{-ta-pmb-aq(t+b)}\nonumber\\
   &&=\exp\left(\frac{-{\rm i}\pi(a^2+b^2+2qab+\delta(a+b))}{d}\right)\cdot\left(\sum_{m=0}^{d-1}\exp\big(\frac{-2{\rm i}\pi m(a+pb)}{d}\big)\right)\cdot \left(\sum_{t=0}^{d-1}\exp\big(\frac{-2{\rm i}\pi t(a+b+aq)}{d}\big)\right)\nonumber\\
   &&=d^2\exp\left(\frac{-{\rm i}\pi(a^2+b^2+2qab+\delta(a+b))}{d}\right)\cdot\delta_{a+pb=0\bmod d}\cdot \delta_{a+b+aq=0\bmod d}.
\end{eqnarray}
Furthermore, we analyze the congruence system
\begin{equation}
\label{eq:tyfczu}
    \begin{cases}
        a+pb=0 \bmod d\\
        a+b+qa=0 \bmod d,
    \end{cases}
\end{equation}
where $p$, $q$ and $d$ are positive integers such that $p + q\geq 2$ and $d\geq 2$. Therefore, we can obtain the solution to the congruence system as follows,
\begin{equation}
    \begin{cases}
        a=-pv\cdot\frac{d}{g} \bmod d\\
        b=v\cdot \frac{d}{g}\bmod d,
    \end{cases}
\end{equation}
where $v=0,1,\cdots,g-1$ and $g={\rm gcd}(d,p(q+1)-1)$.

Hence, we obtain the complete reduced-state classification for all aligned unauthorized subsets.
\begin{equation}
\label{eq:fenlei1}
   \rho_{S_1,\ldots,S_p,N_{p+1},\ldots,N_n}
=
\frac{1}{d^n}
\begin{cases}
I_d^{\otimes n},\ \ {\rm if}\ \
 {\rm gcd}(d,p(q+1)-1)=1,\\
 \\
\sum_{a,b=0}^{d-1}\exp\left(\frac{-{\rm i}\pi(a^2+b^2+2qab+\delta(a+b))}{d}\right)\bra{\psi}X_d^{a}Z_d^{b}\ket{\psi} (X_d^{a}Z_d^{b})^{\otimes^p}\\
\qquad\qquad\otimes (X_d^{-a}Z_d^{b})^{\otimes q},\qquad
 \text{otherwise}.
\end{cases}
\end{equation}
Specifically, when $d=2$, Eq.(\ref{eq:fenlei1}) can be simplified to the following expression,
\[
\rho_{S_1,\ldots,S_p,N_{p+1},\ldots,N_n}
=
\frac{1}{2^n}
\begin{cases}
I_2^{\otimes n},
& n \text{ even or } p \text{ even},\\[1ex]
I_2^{\otimes n}+(-1)^{{n-p+1}}\bra{\psi}X_2Z_2\ket{\psi} (X_2Z_2)^{\otimes^n},
& n \text{ odd and } p \text{ odd},
\end{cases}
\]
which is consistent with the result presented in Ref.\cite{gianini2026encrypted}.
\section{Conclusion}
\label{sec:conclusion}
In this work, we provided a complete classification of informative subsets of the storage register in the qudit encrypted-cloning protocol. We focused on the non-trivial case of unauthorized subsets of size $n$ that contain exactly one qudit from each signal-noise pair. By reducing the full encoded state to such subsets and exploiting the algebraic structure of generalized Pauli operators, we derived a closed-form expression for the reduced density matrix.

Our analysis showed that leakage is not generic but occurs only under precise arithmetic conditions. Specifically, for an aligned subset consisting of $p$ signal qudits and
$q=n-p$ noise qudits, we found that the subset is completely uninformative if and only if
${\rm gcd}(d,p(q+1)-1)=1$; in that case the reduced state is maximally mixed. Otherwise, the subset retains a residual dependence on the input state through specific generalized Pauli operators, i.e., it is partially informative.

These results extended the parity-based classification known for qubits ($d=2$) to arbitrary finite dimensions and revealed a dimension?dependent confidentiality boundary in encrypted cloning. This work provided a rigorous foundation for assessing information leakage in qudit-based redundant quantum storage and identified the algebraic origin of the leakage in the interference of phase factors and generalized Pauli commutation relations. Future directions include extending the classification to subsets that also contain the original input qudit and exploring modified encoding unitaries that may suppress leakage entirely.
\section*{Acknowledgments}
The authors are very grateful to anonymous reviewers for constructive comments that have greatly helped to improve the quality of this paper. This research was supported by the Nation Natural Science Foundation of China under Grant No.12301590 and Science Research Project of Hebei Education Department under Grant No.BJ2025061.
\bibliographystyle{unsrt}
\bibliography{main}

\begin{thebibliography}{10}

\bibitem{wootters1982single}
William~K Wootters and Wojciech~H Zurek.
\newblock A single quantum cannot be cloned.
\newblock {\em Nature}, 299(5886):802--803, 1982.

\bibitem{dieks1982communication}
DGBJ Dieks.
\newblock Communication by epr devices.
\newblock {\em Physics Letters A}, 92(6):271--272, 1982.

\bibitem{bennett1992quantum}
Charles~H Bennett, Gilles Brassard, and Artur~K Ekert.
\newblock Quantum cryptography.
\newblock {\em Scientific American}, 267(4):50--57, 1992.

\bibitem{gisin2002quantum}
Nicolas Gisin, Gr{\'e}goire Ribordy, Wolfgang Tittel, and Hugo Zbinden.
\newblock Quantum cryptography.
\newblock {\em Reviews of modern physics}, 74(1):145, 2002.

\bibitem{pirandola2020advances}
Stefano Pirandola, Ulrik~L Andersen, Leonardo Banchi, Mario Berta, Darius
  Bunandar, Roger Colbeck, Dirk Englund, Tobias Gehring, Cosmo Lupo, Carlo
  Ottaviani, et~al.
\newblock Advances in quantum cryptography.
\newblock {\em Advances in optics and photonics}, 12(4):1012--1236, 2020.

\bibitem{renner2008security}
Renato Renner.
\newblock Security of quantum key distribution.
\newblock {\em International Journal of Quantum Information}, 6(01):1--127,
  2008.

\bibitem{li2025microsatellite}
Yang Li, Wen-Qi Cai, Ji-Gang Ren, Chao-Ze Wang, Meng Yang, Liang Zhang,
  Hui-Ying Wu, Liang Chang, Jin-Cai Wu, Biao Jin, et~al.
\newblock Microsatellite-based real-time quantum key distribution.
\newblock {\em Nature}, 640(8057):47--54, 2025.

\bibitem{Wiesemann2026consolidated}
Jerome Wiesemann, Jan Krause, Devashish Tupkary, Norbert L{\"{u}}tkenhaus,
  Davide Rusca, and Nino Walenta.
\newblock A consolidated and accessible security proof for finite-size
  decoy-state quantum key distribution.
\newblock {\em {Quantum}}, 10:2037, 2026.

\bibitem{wang2026demonstration}
Ke~Wang, Zhide Lu, Chuanyu Zhang, Gongyu Liu, Jiachen Chen, Yanzhe Wang, Yaozu
  Wu, Shibo Xu, Xuhao Zhu, Feitong Jin, et~al.
\newblock Demonstration of low-overhead quantum error correction codes.
\newblock {\em Nature Physics}, 22:308--314, 2026.

\bibitem{aydin2026quantum}
Arda Aydin, Victor~V Albert, and Alexander Barg.
\newblock Quantum error correction beyond su (2): spin, bosonic, and
  permutation-invariant codes from convex geometry.
\newblock {\em PRX Quantum}, 7(1):010341, 2026.

\bibitem{buvzek1996quantum}
Vladimir Bu{\v{z}}ek and Mark Hillery.
\newblock Quantum copying: Beyond the no-cloning theorem.
\newblock {\em Physical Review A}, 54(3):1844, 1996.

\bibitem{hillery1997quantum}
Mark Hillery and V~Bu{\v{z}}ek.
\newblock Quantum copying: Fundamental inequalities.
\newblock {\em Physical Review A}, 56(2):1212, 1997.

\bibitem{duan1998probabilistic}
Lu-Ming Duan and Guang-Can Guo.
\newblock Probabilistic cloning and identification of linearly independent
  quantum states.
\newblock {\em Physical review letters}, 80(22):4999, 1998.

\bibitem{bi2025virtual}
Zhi-Hao Bi, Jing-Tao Qiu, and Xiao-Dong Yu.
\newblock Virtual cloning of quantum states.
\newblock {\em Physical Review A}, 112(1):012420, 2025.

\bibitem{yamaguchi2026encrypted}
Koji Yamaguchi and Achim Kempf.
\newblock Encrypted qubits can be cloned.
\newblock {\em Physical Review Letters}, 136(1):010801, 2026.

\bibitem{huang2024performing}
Jose Luis~Lo Huang and Vincent~C Emeakaroha.
\newblock Performing distributed quantum calculations in a multi-cloud
  architecture secured by the quantum key distribution protocol.
\newblock {\em SN Computer Science}, 5(4):410, 2024.

\bibitem{hillery1999quantum}
Mark Hillery, Vladim{\'\i}r Bu{\v{z}}ek, and Andr{\'e} Berthiaume.
\newblock Quantum secret sharing.
\newblock {\em Physical Review A}, 59(3):1829, 1999.

\bibitem{zhang2025device}
Qi~Zhang, Jia-Wei Ying, Zhong-Jian Wang, Wei Zhong, Ming-Ming Du, Shu-Ting
  Shen, Xi-Yun Li, An-Lei Zhang, Shi-Pu Gu, Xing-Fu Wang, et~al.
\newblock Device-independent quantum secret sharing with advanced random key
  generation basis.
\newblock {\em Physical Review A}, 111(1):012603, 2025.

\bibitem{di2025security}
Alessio Di~Santo, Walter Tiberti, and Dajana Cassioli.
\newblock Security and fairness in multiparty quantum secret sharing protocol.
\newblock {\em IEEE Transactions on Quantum Engineering}, 6:1--18, 2025.

\bibitem{usenko2026continuous}
Vladyslav~C Usenko, Antonio Ac{\'\i}n, Romain All{\'e}aume, Ulrik~L Andersen,
  Eleni Diamanti, Tobias Gehring, Adnan~AE Hajomer, Florian Kanitschar,
  Christoph Pacher, Stefano Pirandola, et~al.
\newblock Continuous-variable quantum communication.
\newblock {\em Reviews of Modern Physics}, 98(1):015003, 2026.

\bibitem{cear2026clon}
Filip-Ioan~Cear$\check{a} $.
\newblock Cloning encrypted quantum states in arbitrary dimensions.
\newblock 2026.

\bibitem{gianini2026encrypted}
Gabriele Gianini, Omar Hasan, Corrrado Mio, Stelvio Cimato, and Ernesto
  Damiani.
\newblock Encrypted clones can leak: Classification of informative subsets in
  quantum encrypted cloning.
\newblock {\em arXiv preprint arXiv:2604.10155}, 2026.

\bibitem{schwinger1960unitary}
Julian Schwinger.
\newblock Unitary operator bases.
\newblock {\em Proceedings of the National Academy of Sciences},
  46(4):570--579, 1960.

\end{thebibliography}
\end{document}